\documentclass{article}

\PassOptionsToPackage{numbers}{natbib}

 \usepackage[preprint]{neurips_2026}

\usepackage[utf8]{inputenc}
\usepackage[T1]{fontenc}
\usepackage{url}
\usepackage{xurl}       %
\usepackage{hyperref}
\usepackage{booktabs}
\usepackage{amsfonts}
\usepackage{amsmath,amssymb}
\usepackage{amsthm}
\theoremstyle{definition}
\newtheorem{proposition}{Proposition}
\usepackage{nicefrac}
\usepackage{microtype}
\usepackage{xcolor}
\usepackage{graphicx}
\usepackage{array}
\usepackage{float}

\title{Do AI Activity Indices Really Measure AI Activity?}

\author{
  Riccardo Bertamini\textsuperscript{1,5,*} \quad
  Hugo M.V. Arsenio\textsuperscript{2,5,*} \quad
  Lorenzo Caputi\textsuperscript{3,5} \quad
  Andrea Procopio\textsuperscript{4,5} \\
  \textsuperscript{1}University of Cambridge \quad
  \textsuperscript{2}Bocconi University \quad
  \textsuperscript{3}Stanford University \\
  \textsuperscript{4}MIT \quad
  \textsuperscript{5}BAINSA \\
  \textsuperscript{*}Equal contribution
}

\begin{document}

\maketitle

\begin{abstract}
  AI dashboards increasingly shape economic debate, yet their headline movements often conceal what changed. For a ratio, the same rise can reflect numerator growth, denominator decline, or both. This ambiguity creates structural non-identification: adding observations or richer fixed effects cannot recover the missing information. We propose three propositions showing that a relative-index regression blends the causal effect of interest with co-movement between focal and aggregate activity and shocks shared with the outcome. These components can give the coefficient either sign under a zero effect, or erase it under a nonzero effect. Publishing denominator levels changes what can be learned by removing the co-movement channel exactly, although shared-shock confounding remains. The problem extends to power: if most identifying variation comes from a handful of units, those units determine what the design can detect, however large the nominal sample looks. At a fixed standard error, the minimum detectable effect diverges as the effective cluster count approaches one, leaving a null result uninformative. We apply this framework to OECD.AI’s hiring index and Eurostat employment across 24-27 European countries, where the detection floor exceeds the reported association. In a refreshed panel, that association disappears once the 2021 survey break is excluded. Ratio indices can still reveal concentration, timing, and co-movement, but establishing displacement requires more; our framework turns that boundary into a practical standard by making denominator levels a minimum disclosure requirement. 
\end{abstract}

\section{Introduction}
Almost everything we know about how AI is spreading through the economy arrives as a ratio: a model family's standing is its share of arena battles; a provider's reach is its share of served requests; adoption is the share of surveyed firms reporting use. The most quoted labour-market indicator, the OECD.AI relative AI hiring index, is the growth of AI-skilled hiring relative to the growth of all hiring \cite{oecdlinkedin}.

The ratio is published so that nuisances such as coverage or the survey frame cancel, and a raw numerator would be no better defined. The problem is not the ratio, but that the provider does not release the two counts. If we ask whether AI is replacing workers, or whether a model family is falling behind, we are given a number that answers neither question, and no subsequent estimator can correct it.
This can be shown with the Chatbot Arena, which samples pairs of models for human comparison \cite{chiang2024} and releases its battle logs, so both counts are observable. Across 106{,}134 battles among 55 models between June and August 2024 \cite{arena100k}, the Claude family's published share of battles fell from 34.6\% to 16.4\%. Over the same window its win rate against the 21 opponents present in both periods moved by $-0.01$ points once the family's own model mix is held fixed, inside a bootstrap interval of $\pm 4$ points. The primary change occurred in the sampling process: fourteen entrants accounted for 57\% of late-window battles. Although daily volume remained stable, the family's own count declined, resulting in a decreased ratio (Figure~\ref{fig:arena}).

These examples motivate a narrow question: what can a regression on a relative activity index establish when the counts inside it are withheld? We give a formal answer, and we measure how the resulting identification and power constraints bind in the regression.

\paragraph{Contributions and scope.} We study regressions of outcomes on published relative indices when the component levels are withheld, and we characterise the additional disclosure required for identification. Instead of estimating the employment effect of AI, we focus on whether the published indices could identify it at any size. Our contributions are:
\begin{itemize}\setlength{\itemsep}{0pt}\setlength{\parskip}{0pt}
\item  We derive an exact decomposition of the coefficient from a regression of any outcome on a published relative index: the causal effect enters summed with a cyclicality term, which is the classical ratio confound, and a shock covariance term. The decomposition survives every fixed-effects design, because each within-transformation adds one equation and one unknown, so the effect is not identified from the index at any magnitude (Proposition~\ref{prop:nonid}).
\item  We prove that publishing the denominator in levels removes the cyclicality term exactly while the covariance term survives, and that where the two have opposite signs disclosure can leave the total bias larger than before. The textbook remedy, entering the ratio alongside its parts \cite{kronmal1993}, is not available to the analyst here, because the parts are proprietary; only the provider can enable it (Proposition~\ref{prop:id}).
\item We establish a lower bound on the smallest detectable effect that depends on the effective number of clusters rather than on the sample size (Proposition~\ref{prop:power}).
\end{itemize}
Empirically, we join the hiring index to Eurostat employment for 24 to 27 European countries from 2018 to 2025. A one-standard-deviation higher index is associated with about 0.5 percentage points lower employment growth, but inference varies across procedures and two or three countries carry most of the signal. In the panel reaching 2025, the association survives only through the 2021 survey break. Its minimum detectable effect is 1.8 to 4.5 times the estimate. The dashboards reveal concentration, timing and co-movement, but not displacement.

\section{Related work}
\label{sec:related}
Pearson showed that indices sharing a denominator correlate even when the underlying quantities do not \cite{pearson1897}, and Kronmal restated the problem for regression: a ratio may enter only alongside its parts as main effects \cite{kronmal1993}. Kuh and Meyer, and then Belsley, developed the same point for deflated variables in econometrics \cite{kuh1955,belsley1972}, and Aitchison's log-ratio analysis is its compositional-data form \cite{aitchison1982}. Our published index is an Aitchison log-ratio, and Proposition~\ref{prop:id} is Kronmal's prescription written out for a setting where the parts are withheld.

The task framework of Autor, Levy and Murnane \cite{autor2003} and the robot-exposure design of Acemoglu and Restrepo \cite{acemoglu2020} set the template that generative AI inherited. Experimental and firm-level work finds productivity gains on cognitive tasks \cite{noy2023,brynjolfsson2025,babina2024}; task-based measures imply broad exposure \cite{eloundou2024,imf2024,nurski2024,felten2021,gmyrek2023,gmyrek2025} against modest aggregate effects \cite{acemoglu2025}; and direct employment evidence is mixed and mostly national \cite{humlum2026,henseke2026,ecbnewtech,georgieff2021,lebastard2026,bis1325,guarascio2025,hampole2025,canaries2025}, focusing on how large the effect is. We ask whether the published indicators can identify that employment effect at any magnitude. Proposition~\ref{prop:nonid} shows they cannot: with only the ratio index and the outcome observed, the estimated coefficient is consistent with every value of the effect: the coefficient can take either sign when the true effect is zero, and be zero when it is not.

Our object is the measurement layer these dashboards constitute \cite{agrawal2019}: OECD.AI's LinkedIn-derived hiring, migration and skills series \cite{oecdlinkedin}, venture-capital aggregates \cite{oecdvc2025}, enterprise adoption surveys \cite{eurostataiadoption} and vacancy rates \cite{estat_jvr}. The same form governs how AI systems themselves are watched: Chatbot Arena publishes shares of pairwise battles \cite{chiang2024,arena100k}, and Singh et al.\ \cite{singh2025} document unequal sampling, selective disclosure of private variants and silent deprecation behind its ratings. It establishes that one platform's numbers are distorted, whereas we show that a published ratio cannot carry a causal reading however well the platform behaves.

When many decision-makers run the same algorithm, its mistakes stop cancelling out and they share the same errors \cite{kleinberg2021}. The same logic applies to measurement: one dashboard supplies most public claims about AI and jobs, so having many readers gives no independent check on it. Separately, a prediction that gets acted on changes what it predicted \cite{perdomo2020}, and people who know they are being scored move to improve the score \cite{hardt2016}. A ratio is especially easy to influence, since it rises either when AI-skilled hiring goes up or when total hiring goes down.

Cluster-robust asymptotics fail when a handful of clusters carry the identifying variation \cite{cgm2008,mackinnon2017,mnw2023}, effective cluster counts diagnose it \cite{carter2017}, and randomization inference trades the reference distribution for an exchangeability assumption \cite{canay2017,young2019}. The shift-share literature supplies the analogous warning for exposure designs \cite{gpss2020,akm2019}, as does pre-testing for parallel trends \cite{roth2022}. 

\section{Methods}
\subsection{Data}
\label{sec:data}

Three providers supply the labour-market analysis, OECD.AI, Eurostat and the ILO, and Table~\ref{tab:data} gives the series and their coverage, including a policy score of our own construction. Section~\ref{sec:model}'s arena exhibit uses a further source, the released Chatbot Arena battle logs \cite{arena100k}, which the table omits. We exclude 2026 everywhere as a partial year.

\begin{table}[H]
  \caption{Primary sources and their coverage. The text below states what each is used for.}
  \label{tab:data}
  \centering
  \footnotesize
  \setlength{\tabcolsep}{4pt}
  \begin{tabular}{@{}>{\raggedright\arraybackslash}p{4.0cm}>{\raggedright\arraybackslash}p{8.6cm}@{}}
    \toprule
    \textbf{Source and variable} & \textbf{Coverage} \\
    \midrule
    OECD.AI relative AI hiring index \cite{ref2.3} & 2018-01 to 2026-01, monthly, averaged to years \\
    OECD.AI AI skills migration \cite{ref2.1} & Per 10{,}000 LinkedIn members; extracts spanning 2019-2024 (Panel A) and 2021-2025 (Panel B) \\
    OECD.AI VC in AI start-ups \cite{ref1.1,ref1.2} & By country and by industry, 2012-2026; the industry table's country dimension is the investor's country \\
    OECD.AI AI skills penetration \cite{ref2.2} & By country and industry, 2016-2024 \\
    Eurostat employment by occupation \cite{ref3,estat_egai2d} & Ages 15-64; Panel A uses an extract of table lfsa\_egai2d, the other samples lfsa\_epgais over 2016-2025 \\
    ILO generative-AI exposure scores \cite{gmyrek2025} & 427 ISCO-08 unit groups, aggregated to major groups with Eurostat employment weights \\
    Eurostat employment by NACE two-digit \cite{estat_nace} & Ages 20-64, 2018-2025 \\
    Eurostat covariates \cite{estat_edat,estat_gdp,estat_une,estat_ict,estat_stem} and own policy score & Country-year; policy score hand-coded 0-3 (one point each: national AI strategy in force, major update, EU AI Act in force) \\
    \bottomrule
  \end{tabular}
\end{table}

No single join spans 2019 to 2025, so we build two samples. Panel A (24 countries, 2019 to 2024, 144 country-years) has the longer window and the full covariate set and is the primary test; Panel B (26 countries, 2021 to 2025, 130 country-years) is the only sample that observes 2025 and the only one industry investment joins to. They overlap in 88 of Panel B's 130 country-years but differ in window, covariate set and index extract, so Panel B tests Panel A's coefficient on a different extract rather than resampling it, and a pooled panel would have concealed the disagreement. The occupation designs use no migration series, which is why they reach 27 countries.

\subsection{Identification}
\label{sec:model}
Dashboards report activity as a ratio. Let $A_{i,t}$ be a focal count and $B_{i,t}$ an aggregate that contains it, both observed on one platform, and let the published series be $R_{i,t} = \Delta \log A_{i,t} - \Delta \log B_{i,t}$. For the hiring index $A$ is AI-skilled hires and $B$ total hires; for a leaderboard $A$ is one family's battles and $B$ all battles. This object has two key properties.
\begin{itemize}\setlength{\itemsep}{0pt}\setlength{\parskip}{0pt}
\item  Invariant to whatever scales $A$ and $B$ together, which is exactly why providers publish it: platform coverage, the member base and any common reporting convention cancel. 
\item Level information is discarded by the invariance: the same value may reflect a collapsing numerator, a growing denominator, or both, and the series cannot distinguish among them.

\end{itemize}

Two departures from this idealised object matter.
\begin{itemize}\setlength{\itemsep}{0pt}\setlength{\parskip}{0pt}
\item The published index is smoothed, so $R = \Delta\log A - \Delta\log B + \omega$,
where the slack $\omega$ is one further free nuisance: it leaves
Proposition~\ref{prop:nonid} untouched but weakens Proposition~\ref{prop:id}, since
disclosing $B$ then recovers the AI-specific shock only up to $\omega$.
\item This instrument is published as a growth differential, whilst most others of this kind are published as level shares $S = A/B$. The two are one object, since
$\Delta \log S = \Delta \log A - \Delta \log B$, so a regression on $\Delta \log S$
inherits \eqref{eq:decomp} term for term, and one on $\Delta S$ differs by a unit-
and period-specific rescaling that changes no term's status.
\end{itemize}
Consequently, the results apply whether the provider publishes a level share
or a growth differential.

For the hiring index we can argue that the two counts move for unrelated reasons but cannot show it, because neither is published. By contrast, the same analysis is possible for Chatbot Arena because it releases its battle logs \cite{chiang2024}. The figures reported in the introduction locate the change in the numerator. Daily battle volume remained flat, and the roster fell from 41 models to 38, ruling out an expansion of $B$. Yet observing $B$ would establish only that the movement occurred through $A$, not why $A$ moved, as Proposition~\ref{prop:id} shows. And holding opponents fixed, but not the family's own model mix, yields $+2.5$ points (95\% CI $-0.4$ to $+5.5$) rather than $-0.01$, showing that the performance measure is sensitive to changes in within-family composition (Figure~\ref{fig:arena}).

\begin{figure}[t]
  \centering
  \includegraphics[width=0.93\linewidth]{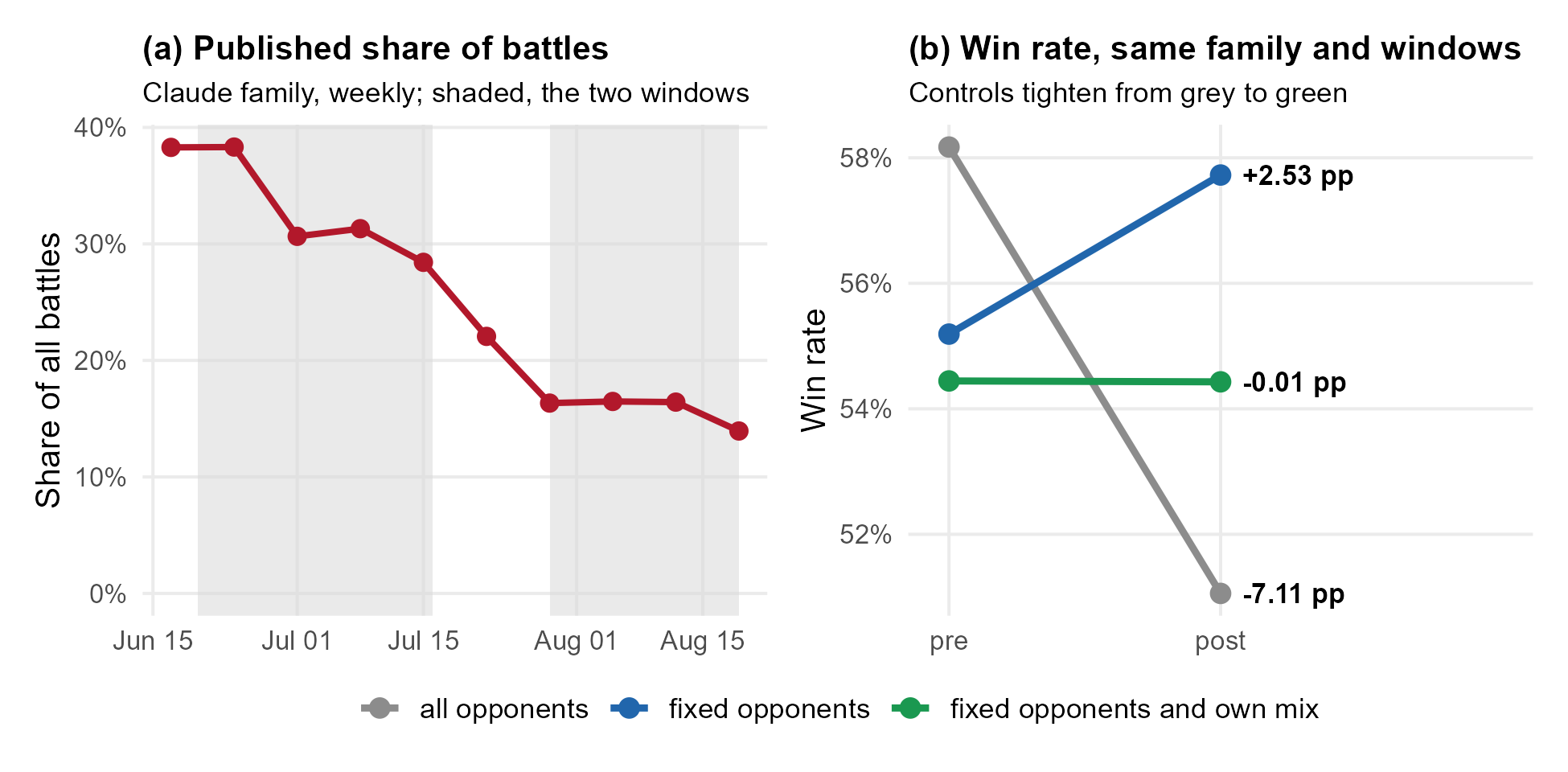}
  \caption{A published index and the signal it is read as. The Claude family's share of Chatbot Arena battles halves as entrants take 57\% of a flat daily volume, while its win rate is flat once the opponent set and the family's own model mix are both held fixed \cite{arena100k,repo}.}
  \label{fig:arena}
\end{figure}

The leaderboard's headline is a Bradley-Terry rating, and it does not escape the problem, since it conditions on a pairing policy and prompt distribution the platform does not publish \cite{singh2025}. The required disclosure is the analogous one: per-family battle counts, the eligible model set per period, and the sampling policy, in levels.

We develop the algebra for the hiring case because the employment data needed to evaluate displacement are available and the index's denominator enters directly into employment dynamics. The identification problem therefore arises from the index itself rather than from a lack of outcome data. The identity $E_{i,t} = E_{i,t-1} + H_{i,t} - S_{i,t}$ makes aggregate hiring a common component of both the index and employment growth $g_{i,t}$, while the numerator measures the AI-skilled hiring margin whose effect on employment is the parameter of interest. Assume finite second moments and a common relative-cyclicality coefficient $\gamma$ across the units entering the pooled moments below. Write the within-country projection of AI-skilled on total hiring as $\Delta \log h = \gamma\, \Delta \log H + u$, where $\gamma$ is a relative cyclicality parameter and $u$ an AI-specific shock orthogonal to $\Delta \log H$ by construction, and let
\begin{equation}
\label{eq:outcome}
g_{i,t} \;=\; \delta\, u_{i,t} \;+\; \eta_{i,t},
\end{equation}
where $\delta$ is the displacement parameter of interest and $\eta$ collects everything else, including the cycle. We define $\delta$ outside \eqref{eq:outcome}, as the causal effect of intervening on the AI-specific hiring shock while holding $\eta$, and with it aggregate hiring, fixed; \eqref{eq:outcome} is a linearity restriction on that structure, not a definition of $\eta$, which would make it an accounting identity holding for any $\delta$. We place no restriction on $\mathrm{Cov}(u, \eta)$: a common shock can move AI-specific hiring and employment together without either causing the other. Substituting the projection into the index gives $R = (\gamma-1)\Delta \log H + u$, and hence
\begin{equation}
\label{eq:decomp}
\underbrace{\frac{\mathrm{Cov}(R, g)}{\mathrm{Var}(R)}}_{\textstyle \beta,\ \text{estimated}} \;=\; \underbrace{\frac{(\gamma-1)\,\mathrm{Cov}(\Delta \log H, g)}{\mathrm{Var}(R)}}_{\textstyle \text{relative cyclicality}} \;+\; \underbrace{\frac{\delta\,\mathrm{Var}(u)}{\mathrm{Var}(R)}}_{\textstyle \text{displacement}} \;+\; \underbrace{\frac{\mathrm{Cov}(u, \eta)}{\mathrm{Var}(R)}}_{\textstyle \text{common shock}}.
\end{equation}
Our only object of interest is the displacement term.

\begin{proposition}[Non-identification]
\label{prop:nonid}
Let $\beta = \mathrm{Cov}(R,g)/\mathrm{Var}(R)$, and let the data identify only $\beta$ and $\mathrm{Var}(R)$. Then $\delta$ is not identified from $(R,g)$:
\begin{equation*}
\forall\, \delta \in \mathbb{R},\ \forall\, b \in \mathbb{R},\quad \exists\, \gamma \in \mathbb{R},\ \exists\, c = \mathrm{Cov}(u,\eta) \in \mathbb{R} : \ \eqref{eq:decomp} \text{ holds with } \beta = b.
\end{equation*}
In particular $\beta$ takes either sign with $\delta = 0$, and $\beta = 0$ is attainable with $\delta \neq 0$. The statement is unchanged under any linear within-transformation $M$, including country and year fixed effects and country-specific trends.
\end{proposition}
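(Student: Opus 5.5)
The plan is to treat \eqref{eq:decomp} as one linear equation in the unknowns $(\gamma,\delta,c)$, with $\beta$ and $\mathrm{Var}(R)$ observed. The nuisance $c=\mathrm{Cov}(u,\eta)$ enters additively with coefficient $1/\mathrm{Var}(R)\neq 0$. So for any target pair $(\delta,b)$ we can solve for it.

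Concretely, fix $\delta$ and $b$, and take $\mathrm{Var}(R)=V>0$ as given. Choose any $\gamma$; the simplest is $\gamma=1$, which kills the cyclicality term. Then set
\begin{equation*}
c \;=\; bV - \delta\,\mathrm{Var}(u) - (\gamma-1)\,\mathrm{Cov}(\Delta\log H, g).
\end{equation*}
Substituting back gives $\beta=b$ exactly.

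What remains is to check that these choices are jointly admissible. The positive semidefiniteness constraints on the covariance matrix of $(\Delta\log H,u,\eta)$ must still hold, and $\mathrm{Var}(R)=(\gamma-1)^2\mathrm{Var}(\Delta\log H)+\mathrm{Var}(u)$ must equal the observed $V$. With $\gamma=1$ this forces $\mathrm{Var}(u)=V$. We then need $|c|\le \sqrt{V\,\mathrm{Var}(\eta)}$ by Cauchy--Schwarz. This is the main obstacle, because $c$ is pinned down but bounded.

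I will handle it by noting that $\mathrm{Var}(\eta)$ is unrestricted by the observables $(\beta,\mathrm{Var}(R))$, and $\eta$ is otherwise free. So we can take $\mathrm{Var}(\eta)$ large enough, for instance $\mathrm{Var}(\eta)\ge c^2/V$, and realise $\eta=\lambda u+\varepsilon$ with $\varepsilon$ independent of $u$ and $\Delta\log H$. Here $\lambda=c/V$ delivers the required covariance. Alternatively, if one prefers not to enlarge $\mathrm{Var}(\eta)$, one can let $\gamma\neq1$ absorb part of the required shift through the cyclicality term, since $\mathrm{Cov}(\Delta\log H,g)$ is likewise free. Either device closes the existence claim.

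The two special cases follow by instantiation. For $\delta=0$ and $b$ of either sign, set $c=bV$. For $\delta\neq0$ and $b=0$, set $c=-\delta V$, using $\gamma=1$ and $\mathrm{Var}(u)=V$.

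For the invariance claim, let $M$ be any linear within-transformation: a projection annihilating country dummies, year dummies and country trends. Apply $M$ to $R$, $g$, $\Delta\log H$, $u$ and $\eta$. Linearity preserves $MR=(\gamma-1)M\Delta\log H+Mu$ and $Mg=\delta Mu+M\eta$, because $\gamma$ and $\delta$ are common coefficients. The decomposition \eqref{eq:decomp} therefore holds verbatim with transformed moments. The transformed nuisance $\mathrm{Cov}(Mu,M\eta)$ is again a free unknown, one new unknown per new equation. The same construction then applies term for term, so the conclusion is unchanged under $M$.
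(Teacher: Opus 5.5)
Your proposal is correct and follows the paper's proof: you treat \eqref{eq:decomp} as one linear equation in $(\gamma,\delta,c)$, set $\gamma=1$ and solve $c=bV-\delta\,\mathrm{Var}(u)$, and you note that any within-transformation $M$ reproduces the decomposition with a new free nuisance $\mathrm{Cov}(Mu,M\eta)$. Your Cauchy--Schwarz admissibility check is a step the paper leaves implicit in its remark that any orthogonal split of $R$ is admissible, and it would hold even if $\mathrm{Var}(g)$ were also observed: with $\gamma=1$ and $\eta=g-\delta R$ it reduces to $b^2V\le\mathrm{Var}(g)$, which always holds.
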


\begin{proof}
Write $C_{Hg} = \mathrm{Cov}(\Delta\log H, g)$ and $V = \mathrm{Var}(R)$, so \eqref{eq:decomp} reads $\beta V = (\gamma-1)C_{Hg} + \delta\,\mathrm{Var}(u) + c$: one equation, three unknowns $(\gamma, \delta, c)$. Given any target $(\delta, b)$, set $\gamma = 1$ and $c = bV - \delta\,\mathrm{Var}(u)$; then $\beta = b$. Taking $\delta = 0$, $c = bV$ gives $\beta = b$ of either sign; taking $b = 0$, $c = -\delta\,\mathrm{Var}(u) \neq 0$ gives $\beta = 0$ with $\delta \neq 0$. Equivalently, any orthogonal split of $R$ is admissible, and $\delta$ rescales freely as $\mathrm{Var}(u) \to 0$. For the fixed-effects claim let $M$ be symmetric and idempotent, removing country and year means or any further within-country regressors. Then $MR = (\gamma-1)M\Delta\log H + Mu$ and $Mg = \delta Mu + M\eta$, so
\begin{equation*}
\beta_M \;=\; \frac{(\gamma-1)\,\mathrm{Cov}(M\Delta\log H,\, Mg) \;+\; \delta\,\mathrm{Var}(Mu) \;+\; \mathrm{Cov}(Mu,\, M\eta)}{\mathrm{Var}(MR)},
\end{equation*}
which is \eqref{eq:decomp} with transformed moments and a \emph{new} free nuisance $\mathrm{Cov}(Mu, M\eta)$. Each $M$ buys one equation and one unknown, and adds no restriction unless $\mathrm{Cov}(u,\eta)$ is itself restricted.
\end{proof}

Three restrictions bear on the proposition; the first two jointly overturn it, the third alone suffices:
\begin{itemize}\setlength{\itemsep}{0pt}\setlength{\parskip}{0pt}
\item $\gamma = 1$, so AI-skilled and total hiring share a cycle exactly, eliminates the first term, but it is an assumption about the very quantity the index conceals and no published series tests it. 
\item $\mathrm{Cov}(u,\eta)=0$, asserting that nothing moving employment also moves AI-specific hiring; however, European hiring cooled and AI-specific hiring fell together over 2022 and 2023.
\item Using an instrument that shifts $u$ while moving neither $\eta$ nor the withheld denominator $\Delta\log H$. We find none in the published data; the likeliest candidates, venture capital and skills migration, may affect employment through channels other than the AI-specific hiring shock, and both are procyclical, so they cannot be assumed to shift $u$ alone.
\end{itemize}

Section~\ref{sec:related} places this decomposition in the ratio literature, leaving the sign of the cyclicality term open. The sign of $\gamma$ is not determined a priori: adjustment costs make specialist hiring lumpy, arguing for $\gamma < 1$ and a negative $\beta$ with no displacement at all, while specialist hiring fell further than aggregate hiring in 2022 and 2023, arguing for $\gamma > 1$. Published data cannot separate the two. 
Figure~\ref{fig:ident} shows how, with $\delta$ held at zero, the coefficient changes sign once
$\mathrm{Cov}(u,\eta)$ crosses $-(\gamma-1)\mathrm{Cov}(\Delta\log H,\eta)$. Panel a is not calibrated to the data, so it establishes only that the bias can run
in either direction.

\begin{figure}[t]
  \centering
  \includegraphics[width=0.82\linewidth]{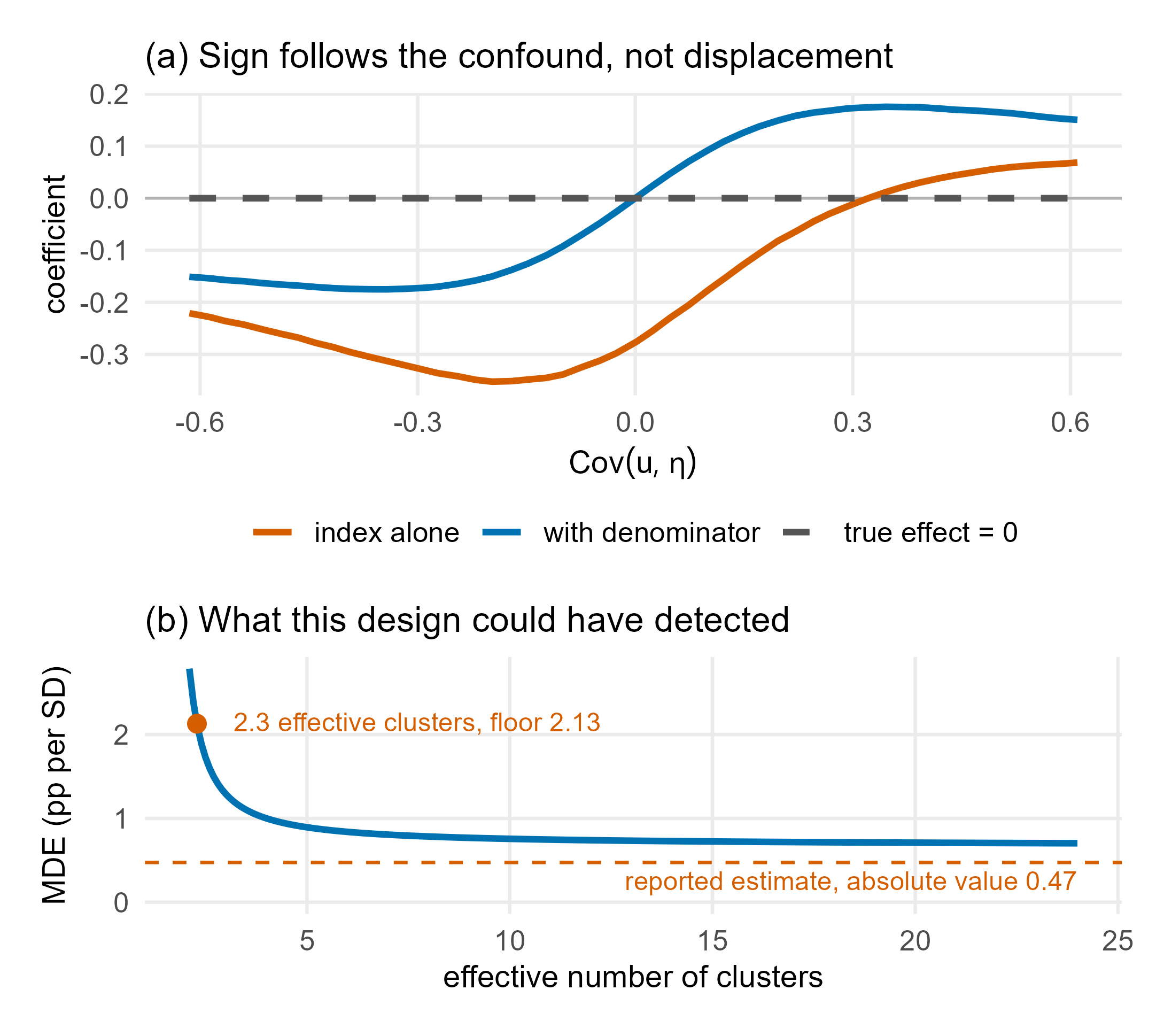}
  \caption{(a) With $\delta=0$ everywhere, the published-index coefficient changes sign as $\mathrm{Cov}(u,\eta)$ varies; disclosure removes the cyclicality channel only, and over part of the range sits further from the truth (simulation details in Appendix~\ref{app:aux}). (b) The Proposition~\ref{prop:power} detection floor at $\sigma=0.2405$: the minimum detectable effect at 80\% power against the effective cluster count, with this design marked at $G^{*}=2.3$ (floor 2.13) and the reported estimate 0.47 dashed.}
  \label{fig:ident}
\end{figure}

\begin{proposition}[What disclosure buys, and what it does not]
\label{prop:id}
Suppose the provider also publishes $\Delta \log B$, here total platform hiring. Then $\gamma$ and $u$ are identified, and the coefficient on $u$ in the projection of $g$ on $(\Delta \log H, u)$ equals $\delta + \mathrm{Cov}(u,\eta)/\mathrm{Var}(u)$. The relative-cyclicality channel is removed exactly and $\gamma$ becomes measurable; $\delta$ remains confounded by $\mathrm{Cov}(u,\eta)$ and is still not identified.
\end{proposition}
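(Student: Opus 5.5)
The plan has three parts: show that disclosure identifies $\gamma$ and $u$, compute the coefficient on $u$ in the two-regressor projection, and then reuse the argument of Proposition~\ref{prop:nonid} with $\gamma$ now pinned down.

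\emph{Identification of $\gamma$ and $u$.} With $\Delta\log H$ published alongside $R$, the analyst observes $\Delta\log h = R + \Delta\log H$ (setting aside the smoothing slack $\omega$, which is treated separately). By the definition of the within-country projection $\Delta\log h = \gamma\,\Delta\log H + u$ with $\mathrm{Cov}(u,\Delta\log H)=0$, the slope is identified as $\gamma = \mathrm{Cov}(\Delta\log h,\Delta\log H)/\mathrm{Var}(\Delta\log H)$. The shock is then recovered as the residual $u = \Delta\log h - \gamma\,\Delta\log H$. Equivalently, $R = (\gamma-1)\Delta\log H + u$, so regressing $R$ on $\Delta\log H$ returns $\gamma-1$ with residual $u$. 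This needs $\mathrm{Var}(\Delta\log H)>0$ and the common-$\gamma$ assumption already imposed in the text.

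\emph{The two-regressor coefficient.} Project $g$ on $(\Delta\log H, u)$. Because the regressors are orthogonal by construction, Frisch--Waugh--Lovell makes the coefficient on $u$ equal to $\mathrm{Cov}(u,g)/\mathrm{Var}(u)$. Substituting \eqref{eq:outcome} gives $\mathrm{Cov}(u,g) = \delta\,\mathrm{Var}(u) + \mathrm{Cov}(u,\eta)$, hence the coefficient
\[
\delta + \frac{\mathrm{Cov}(u,\eta)}{\mathrm{Var}(u)} .
\]
Comparing with \eqref{eq:decomp}, the $(\gamma-1)\mathrm{Cov}(\Delta\log H,g)$ term is absent. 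It has been absorbed into the coefficient on $\Delta\log H$, which is $\mathrm{Cov}(\Delta\log H,g)/\mathrm{Var}(\Delta\log H)$. So the relative-cyclicality channel is removed exactly. The same computation goes through under any within-transformation $M$, taking the projection on $M\Delta\log H$ and redefining $u$ as the residual after $M$.

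\emph{Non-identification.} This step mirrors Proposition~\ref{prop:nonid}. The observable moment $\mathrm{Cov}(u,g)$ gives one equation in the two unknowns $(\delta, c)$, where $c=\mathrm{Cov}(u,\eta)$ remains unrestricted because the model places no constraint on it. For any target $\delta'$, setting $c = \mathrm{Cov}(u,g) - \delta'\,\mathrm{Var}(u)$ reproduces every observed moment. This includes $\mathrm{Var}(u)$, $\gamma$ and $\mathrm{Cov}(\Delta\log H,g)$, the last of which only restricts $\mathrm{Cov}(\Delta\log H,\eta)$. Hence $\delta$ is not identified.

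The main obstacle is making precise that disclosure adds no restriction tying $c$ to the other observables. A rigorous version should exhibit, for each $\delta'$, a joint second-moment matrix of $(\Delta\log H, u, \eta)$ that is positive semidefinite and matches the data. The first choice is to keep $\mathrm{Var}(\eta)$ large enough that any $c$ and any $\mathrm{Cov}(\Delta\log H,\eta)$ remain feasible, which a Schur-complement check confirms.
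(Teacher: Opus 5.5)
Your proof is correct and follows the paper's argument: you observe $\Delta\log h = R + \Delta\log H$, read off $\gamma$ and $u$ as the projection coefficient and residual, and use $u \perp \Delta\log H$ to get $\mathrm{Cov}(u,g)/\mathrm{Var}(u) = \delta + \mathrm{Cov}(u,\eta)/\mathrm{Var}(u)$, and your explicit non-identification step spells out what the paper leaves implicit. The feasibility worry you flag is settled by setting $\eta' = g - \delta' u$, since the covariance matrix of $(\Delta\log H, u, \eta')$ is then a linear image of the observed one and hence positive semidefinite; note that $\mathrm{Var}(\eta)$ is not a free choice you can enlarge, because matching $\mathrm{Var}(g)$ pins it down.
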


\begin{proof}
Given $R$ and $\Delta \log H$, the series $\Delta \log h = R + \Delta \log H$ is observed, so $\gamma$ is the projection coefficient of one observed series on another and $u$ is its residual. Regressing $g$ on $(\Delta \log H, u)$ with $u \perp \Delta \log H$ gives a coefficient on $u$ of $\mathrm{Cov}(u,g)/\mathrm{Var}(u)$, which by \eqref{eq:outcome} equals $\delta + \mathrm{Cov}(u,\eta)/\mathrm{Var}(u)$.
\end{proof}
The removal is exact because u becomes the residual of an observed projection. Disclosure does not however shrink the bias: where the channels have opposite signs, removing one enlarges the net error, as Figure~\ref{fig:ident} shows, so a disclosed index can sit further from the truth than the ratio it replaced while being far more informative about why. One caveat is substantive: the provider publishes rates, not counts, so recovering $\Delta\log H$ also needs the member-base growth that cancels out in the ratio. The required disclosure is therefore the underlying levels. Even with those levels, however, identification is not the only constraint.

\begin{proposition}[Detection floor]
\label{prop:power}
For a clustered design with standard error $\sigma$ and $G^{*}$ effective clusters, the smallest effect detectable at the 5\% level with power $1-\kappa$ is approximately $\mathrm{MDE} = \left(t_{0.975,\,G^{*}-1} + t_{1-\kappa,\,G^{*}-1}\right)\sigma$, which diverges as $G^{*} \to 1$ at fixed $\sigma$, irrespective of the number of observations.
\end{proposition}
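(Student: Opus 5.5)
The plan is to derive the formula from the standard power calculation for a one-sided shift in a $t$-statistic, and then read off the divergence from the behaviour of Student-$t$ quantiles at low degrees of freedom.

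\textbf{Step 1 (reference distribution).} I would model the cluster-robust estimator $\hat\beta$ as approximately $\hat\beta \approx \beta + \sigma T$, with $T \sim t_{G^{*}-1}$. This is the small-cluster reference distribution of \cite{cgm2008,carter2017}, in which the effective cluster count $G^{*}$ replaces the nominal count as degrees of freedom. I expect this modelling step to be the main obstacle. The statement is only approximate precisely because $G^{*}$ is a heuristic degrees-of-freedom correction rather than an exact pivot. So I would state it as the working assumption that defines ``approximately''. I would also note that the number of observations enters only through $\sigma$, never through the degrees of freedom, since the variance estimator has rank bounded by the number of clusters.

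\textbf{Step 2 (power calculation).} Under the null $\beta=0$, the two-sided 5\% test rejects when $|\hat\beta|/\sigma > t_{0.975,G^{*}-1}$. Under an alternative $\beta=m>0$, I would ignore the negligible opposite-tail rejection. Power is then
\[
P\bigl(T > t_{0.975,G^{*}-1} - m/\sigma\bigr).
\]
Setting this equal to $1-\kappa$ requires $t_{0.975,G^{*}-1} - m/\sigma = -t_{1-\kappa,G^{*}-1}$, using the symmetry of $T$. Solving gives $m = \bigl(t_{0.975,G^{*}-1}+t_{1-\kappa,G^{*}-1}\bigr)\sigma$. This treats the noncentral distribution as a location shift of the central $t$, which is the second source of approximation.

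\textbf{Step 3 (divergence).} As $G^{*}\to 1$ the degrees of freedom $\nu=G^{*}-1\to 0^{+}$. For any fixed $p>1/2$, the quantile $t_{p,\nu}$ increases without bound as $\nu\downarrow 0$. To see this, the tail satisfies $P(|T|>x)\to 1$ for each fixed $x$ as $\nu\to 0$. This follows from the representation $T=Z/\sqrt{\chi^2_\nu/\nu}$, where $\chi^2_\nu/\nu \to 0$ in probability as $\nu\to0$, so $|T|\to\infty$ in probability. Hence both quantiles diverge, and with them the MDE at fixed $\sigma$.

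Because $\sigma$ is held fixed and $N$ appears nowhere else, the result holds irrespective of the number of observations. For example, with $\kappa=0.2$ and $G^{*}=2.3$, the formula gives the floor reported in Figure~\ref{fig:ident}.
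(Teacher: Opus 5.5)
Your derivation is correct and follows the paper's route. The paper gives no separate proof beyond the standard minimum-detectable-effect formula with $t(G^{*}-1)$ critical values, treats $G^{*}$ as a property of the design so that $N$ enters only through $\sigma$, and concedes in the appendix that the location-shift closed form understates the exact noncentral-$t$ floor (2.13 versus 2.38 at $G^{*}=2.29$), which matches your two sources of approximation; your argument that $t_{p,\nu}\to\infty$ via $\chi^2_\nu/\nu\to 0$ in probability is a rigorous step the paper leaves implicit, though as written it needs $1-\kappa>1/2$ for the second quantile to diverge upward, which covers the 80\% power case used throughout.
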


$G^{*}$ is the effective cluster count implied by the concentration of the regression's own influence: it falls below the nominal count whenever a few clusters carry most of the squared score mass, so it is a property of the design: adding observations to the same few clusters does not raise it; only
variation spread across more clusters would, and the panel already spans the continent.

\section{Results}
\subsection{Dashboards}
\label{sec:pipeline}
Financing is deal-concentrated, though less than the aggregate share implies. IT infrastructure takes 65.21\% of European AI venture capital in 2025, but one investor-country cell carries most of that, and the share falls to 27.8\% on EU27 member rows, where there is no rebound  above the 2021 level. One robust finding remains: mega-deals were about 73\% of global AI investment value \cite{oecdvc2025}. Appendix~\ref{app:desc} gives the series and the EU27 aggregation gap.

The published index increases in 2024, eighteen months after the ChatGPT release~\cite{chatgptlaunch}.   %
We hold two extracts of the index: the medians here use the refreshed one and Panel A's regressions the other, so agreement between them is corroboration only within one data ecosystem. The median across 27 countries ran 3.30, 1.48, 3.43 over 2018 to 2020, turned negative at $-1.31$, $-4.93$, $-4.32$ over 2021 to 2023, then rose
to 6.80 and 7.96. The upturn is common to both extracts and survives changes of country set, means for medians, and no year-averaging. The pre-2021 level does not, so the rise partly recovers the earlier level rather than establishing a new one.

Capital and talent co-move, and employment reallocates. Lagged venture capital goes with a higher hiring index the following year under country and year effects (3.60, SE 1.73, $p = 0.048$), but not without them and not on the industry panel, and the outcome in that model is the index itself, so Proposition~\ref{prop:nonid} bounds it too (Figure~\ref{fig:industry}). Employment shares shift toward IT infrastructure and healthcare and away from the residual. These are shares, so they describe reallocation, and the programming trend predates generative AI \cite{eurostatict}. Appendix~\ref{app:desc} gives the coefficients and the shares.

Treating November 2022 as the event date, three of four annual designs reject their pre-trend test; the fourth, whose outcome is employment growth rather than the index, does not ($p = 0.100$) and shows nothing after the date. A shift-share sensitivity agrees, since its bivariate first stage flips sign once controls enter \cite{gpss2020,akm2019,roth2022}. All four designs reported in Appendix~\ref{app:aux}.

\subsection{AI hiring and employment growth}
\label{sec:country}
Proposition~\ref{prop:nonid} says a coefficient of either sign is compatible with no displacement. We nonetheless estimate the regression and assess what it can support. The outcome $Y_{i,t}$ is total employment growth in percent year on year, and the baseline is
  \begin{equation*}
  Y_{i,t}
  = \alpha_i + \lambda_t
  + \beta_1 \text{Hiring}_{i,t}
  + \beta_2 \text{Migration}_{i,t}
  + \beta_3 \text{Tertiary}_{i,t}
  + \beta_4 \text{Policy}_{i,t}
  + \varepsilon_{i,t},
  \end{equation*}
  with errors clustered by country. Twenty-four clusters is too few for reliable
  asymptotics, so we fix four procedures in advance: CRV1 intervals
  ($t(23)$)\footnote{The package supports two conventions. CRV1 charges absorbed fixed
  effects to the degrees of freedom, giving $\mathrm{SE}=0.241$ and $p=0.061$, whereas
  the package default gives $\mathrm{SE}=0.219$ and $p=0.041$. We use CRV1 for every
  reported result.}; a wild cluster bootstrap-$t$ with Rademacher weights, null imposed,
  $B=99{,}999$, and seed 42 \cite{cgm2008,mackinnon2017}; the same $t$ referred to the
  effective cluster count; and studentized randomization inference
  \cite{canay2017,young2019}. A 200-variant specification curve leaves 198 estimates
  negative, but 192 of the 200 perturb one 144-observation regression (Appendix~\ref{app:spec}).

\begin{table}[H]
\caption{Robustness of the relative-AI-hiring coefficient (per 1 SD), Panel A. Intervals are CRV1 $t(23)$; $p$ columns are analytic clustered and wild cluster bootstrap-$t$ (99{,}999 draws for the baseline, 4{,}999 elsewhere, one shared draw sequence). Remaining preprocessing rows and full influence and inference diagnostics are in Appendix~\ref{app:aux}. The text reports two further reference distributions \cite{repo}.}
  \label{tab:robust}
  \centering
  \footnotesize
  \setlength{\tabcolsep}{4pt}
  \begin{tabular}{@{}>{\raggedright\arraybackslash}p{4.6cm}cccc@{}}
    \toprule
    Check & Coefficient & 95\% interval & $p_{\text{analytic}}$ & $p_{\text{bootstrap}}$ \\
    \midrule
    Two-way FE, baseline & $-0.47$ & $[-0.97, +0.02]$ & $0.061$ & $0.037$ \\
    $+$ GDP-growth and unemployment controls & $-0.46$ & $[-0.83, -0.09]$ & $0.017$ & $0.014$ \\
    Without COVID years ($n = 96$) & $-0.45$ & $[-0.99, +0.09]$ & $0.097$ & $0.003$ \\
    Without 2021, the LFS break year ($n = 120$) & $-0.45$ & $[-1.07, +0.18]$ & $0.157$ & $0.030$ \\
    \bottomrule
  \end{tabular}
\end{table}

Relative AI hiring is the only baseline regressor with a recurring partial association: a 1 SD higher index goes with roughly 0.3 to 0.6 percentage points lower employment growth within a country, while migration, tertiary share and policy score show none. Table~\ref{tab:robust} reports four specifications; the coefficient stays negative throughout, but the 95\% intervals reach or cross zero in most rows. Employment weighting strengthens it to $-0.89$ per SD ($p = 0.004$) and country trends shrink it to $-0.34$ ($p = 0.17$). Adding GDP growth and unemployment leaves it unchanged: the channel Proposition~\ref{prop:nonid} formalises runs through total hiring, which neither control measures.

The identifying variation is concentrated in very few clusters based on the decomposition: T\"urkiye carries 47\% of the squared mass on the restricted scores the bootstrap resamples, giving 3.2 effective clusters, while on the unrestricted scores Estonia carries 61\% and $G^{*} = 2.3$. Estonia and Latvia hold the panel's two largest regressor values, $+4.6$ and $+4.4$ standard deviations in 2019, so small-country volatility affects hiring as well as migration. Leave-one-cluster-out agrees: dropping T\"urkiye moves the baseline from $-0.47$ ($p = 0.061$) to $-0.35$, Latvia to $-0.50$ and Estonia to $-0.73$ ($p = 0.005$), so single-cluster deletions span $-0.35$ to $-0.73$.

As the estimate depends on the cluster, the p-value depends on the procedure. Each $p_{\text{bootstrap}}$ value in Table~\ref{tab:robust} undercuts its analytic counterpart $p_{\text{analytic}}$, in one row by 0.127, while referring the same $t$ to $t(G^{*}-1)$ gives $p = 0.25$ unrestricted and $p = 0.17$ restricted. It might seem that the bootstrap over-rejects and $t(G^{*}-1)$ corrects it; however, simulated under the null on this design's own matrix, both the bootstrap and CRV1 are correctly sized, so their gap in Table~\ref{tab:robust} is unexplained, while $t(G^{*}-1)$ rejects in none of the 353 replications as concentrated as the data, so $p = 0.25$ is uninformative, which restates Proposition~\ref{prop:power}. Appendix~\ref{app:aux} gives the calibration in full, with two cautions about $G^{*}$: it is estimator-dependent (5.5 under an alternative weighting, $p = 0.11$) and noisy (median 3.4, 5 to 95\% range 1.8 to 7.2). Studentized randomization inference \cite{canay2017,young2019} gives $p = 0.045$ over 20,000 draws and is the least sensitive of the four to the error process, though the exchangeability it rests on strains this panel, since the heterogeneity driving the leave-one-out range is what a permutation null denies. Across procedures, the baseline runs from 0.037 to 0.25, but the 0.25 end is uninformative: it comes from a reference distribution on which this design could never have rejected. We do not need the estimate to be insignificant: Proposition~\ref{prop:nonid} bars the displacement reading at every value.

At the reported $\sigma = 0.2405$ and $G^{*}$ between 2.29 and 5.5, the Proposition~\ref{prop:power} floor at 80\% power runs from 2.13 down to 0.86, and to 0.70 with all 24 clusters effective (Figure~\ref{fig:ident}b); every value exceeds the 0.47 the primary specification reports. Inverting the noncentral $t$ gives exactly 2.38 and 0.87, so the closed form understates the
constraint at the concentrated end. The comparison is not fixed in advance, since $\sigma$ and $G^{*}$ are both estimated from the regression being evaluated, so a floor above the estimate is implied by $p > 0.05$. That ratio, 1.8 to 4.5 times the estimate, therefore rescales the null result rather than testing it. Against a benchmark fixed in advance, even the floor's best case, 0.70 points of annual employment growth per standard deviation, exceeds any employment effect documented for earlier automation waves \cite{acemoglu2020}, so effects of every previously observed size were undetectable by design.

The natural probe of Proposition~\ref{prop:nonid} is a measure whose denominator is not total hiring \cite{eurostataiadoption,ref2.4,estat_jvr}. AI talent concentration, whose denominator is platform membership, is published in three encodings on the same 221 observations. Only one has power against the baseline effect (90\%), and it is null: $+0.02$ per SD, $p = 0.91$. The other two are underpowered at 70\% and 13\%,  so their negative point estimates are not evidence (Appendix~\ref{app:aux}), and two further probes cannot discriminate at this sample size either (Appendix~\ref{app:aux}). The second check is Panel B: with predictors in their own units and venture capital as $\log(1+\text{deals})$, the hiring association holds in sign but weakens to $-0.027$ under two-way FE ($p = 0.059$, bootstrap $0.056$), about $-0.24$ per SD in Panel A. Every 2021 observation carries Eurostat's break flag from the Labour Force Survey redesign, and 2021 growth is measured against the 2020 depression; excluding 2021 moves the coefficient to $+0.002$ ($t = 0.07$) and excluding all break-flags to $+0.003$, while the same exclusion leaves Panel A at $-0.45$ (Figure~\ref{fig:replication}, Appendix~\ref{app:sources}).

The two extracts of the index are not the same series. On the $208$ country-years where they overlap they correlate $0.447$; one is positive in $200$ of those cells and the other in $119$, so they disagree in sign on at least $81$. This undocumented divergence is itself evidence: under the two-way demeaning the baseline uses, the extracts correlate $0.528$, so the identifying variation behind the $-0.47$ is only half-reproducible.

 Local projections \cite{jorda2005} give cumulative associations between $-0.36$ and $-0.62$ over horizons 0 to 3, with only $h = 2$ excluding zero under CRV1, and the lead estimate is imprecise but larger than the contemporaneous one, so reverse causality is not excluded (Appendix~\ref{app:aux}).

\subsection{Occupations}
\label{sec:occ}
A displacement hypothesis predicts losses concentrated in occupations more exposed to AI, which the country regressions cannot see. We test it on a country $\times$ occupation $\times$ year panel of nine ISCO-08 major groups, regressing occupation employment growth on the hiring index interacted with exposure, under country $\times$ year, country $\times$ occupation and occupation $\times$ year fixed effects, clustered by country. Exposure is the index of the published ILO Working Paper \cite{gmyrek2025}, 427 unit groups aggregated to nine with Eurostat employment weights; an unweighted crosswalk and three author-coded rank proxies \cite{felten2021,gmyrek2023} are robustness encodings.

The country $\times$ year effects absorb every country-level shock in levels, so identification is cross-occupation within country-year, but they do \emph{not} absorb the channel of Section~\ref{sec:model}: substituting the decomposition gives $R_{i,t}X_o = (\gamma-1)\Delta \log H_{i,t}X_o + u_{i,t}X_o$, and a country-year term times an occupation characteristic survives those fixed effects exactly as the displacement term does. Exposure indices load on cognitive and clerical content, where cyclical sensitivity also differs.

The interaction is negative across five exposure encodings and an employment-weighted variant, ranging from $-0.34$ to $-1.69$. Two pass the 5\% threshold under CRV1 and one under the wild bootstrap. Excluding the COVID years, the largest economy, or 2021 preserves the sign. With $G^{*}=4.7$, referring the statistic to $t(G^{*}-1)$ gives $p=0.16$. Appendix~\ref{app:desc} reports all estimates and $p$-values.

 Fourteen of the 1{,}458 observations show annual growth above 20 log points, mostly 2021 survey-redesign values; trimming them halves the published-index estimate. The occupational levels, however, run in the opposite direction: growth rises monotonically with exposure in both eras, with the era-over-era gain largest for the \emph{least} exposed tercile (Figure~\ref{fig:terciles}), so the negative interaction is a within-country-year co-movement with the index.

\section{Limitations}
Relative AI hiring measures how fast AI-skilled hiring changes relative to overall hiring, not the number of AI jobs. Skills migration is per 10{,}000 LinkedIn members, so small-country rates are volatile. The venture-capital data measure financing activity, so large deals can dominate.

Proposition~\ref{prop:nonid} formalises one mechanism under a linear projection. It does not model platform selection or measurement error, and the published index is a smoothed percentage change, so Proposition~\ref{prop:id} describes an idealised series. Our $\delta$ is a direct effect holding $\eta$, and so aggregate hiring, fixed: displacement operating through total hiring loads onto $\gamma$, so Proposition~\ref{prop:id} is not a route to a total effect, which is not identified either.

 Net employment cannot reveal gross displacement \cite{davis1992}: separations, hours, wages and task reallocation may move while headcount does not. Every AI-activity measure here derives from LinkedIn's member base, and the arena exhibit from a second platform with its own sampler, so platform composition can move the indicators independently of the quantity of interest. Our nulls are failures to reject.

\section{Discussion}
This work asked what a published activity ratio can identify. A regression of any outcome on the index returns the sum of three terms, only one of which is the causal effect, and no fixed-effects design separates them (Proposition~\ref{prop:nonid}). Publishing the denominator in levels removes the cyclicality term exactly and leaves the shared shock in place (Proposition~\ref{prop:id}), and the effective number of clusters bounds what any such design could detect (Proposition~\ref{prop:power}). Both exhibits show these limits binding: the Arena share halved with the win rate flat because the sampling changed, and the employment association rests on two or three countries, sits below its own detection floor, and reproduces on the 2025 panel only through the 2021 survey break. The dashboards do support concentration, timing and co-movement, as Section~\ref{sec:pipeline} reports.

More broadly, the decomposition applies wherever an ecosystem is watched through ratios, and each result names the disclosure that would answer it. Proposition~\ref{prop:nonid} sets the disclosure minimum: both counts in levels, at the frequency of the ratio, including the base whose growth cancels in the published rate. The Arena exhibit adds the sampling layer: which units were eligible in each period and how they were sampled, without which a share confounds performance with exposure. Section~\ref{sec:country} adds versioning: the divergence we document arrived
with no published notice, so extracts should be citable by version.
Proposition~\ref{prop:power} then obliges the analyst rather than the provider: report the effective number of clusters and the smallest detectable effect alongside any estimate built on such an index.

\appendix

\section{Dashboard descriptives and occupation results}
\label{app:desc}

This appendix reports the arithmetic behind the descriptive statements and the
full occupation results.  These estimates document concentration,
co-movement, and reallocation; none identifies the numerator-specific
employment effect defined in Section~\ref{sec:model}.

\paragraph{Venture-capital concentration.}
Table~\ref{tab:app_vc} shows why the headline European aggregate is sensitive
to its geographic base.  The country field in this extract denotes the
investor side of the flow.  The United Kingdom by IT-infrastructure cell alone
accounts for approximately 54\% of the 2025 all-country total.  Moreover, the
provider's EU27 aggregate is approximately 2.8 times the sum of the member-state
rows.  We cannot reconcile those two series, so we report both rather than
silently selecting one.

\begin{table}[t]
  \centering
  \small
  \setlength{\tabcolsep}{5pt}
  \caption{European AI venture capital by geographic base (USD millions).}
  \label{tab:app_vc}
  \begin{tabular}{lrrrr}
    \toprule
    Geographic base & 2021 total & 2024 total & 2025 total & IT share, 2025 \\
    \midrule
    All country rows       & 41{,}074 & 23{,}654 & 62{,}015 & 65.2\% \\
    Excluding United Kingdom & 18{,}225 & 13{,}170 & 18{,}740 & 37.1\% \\
    EU27 member-state rows & 14{,}837 & 10{,}611 & 13{,}835 & 27.8\% \\
    Provider's EU27 aggregate row & 42{,}039 & 29{,}996 & 39{,}505 & 27.2\% \\
    \bottomrule
  \end{tabular}
\end{table}

\paragraph{Capital, hiring, and employment.}
In the 2018-2025 descriptive panel, lagged log AI venture capital is positively
associated with the following year's relative-AI-hiring index when country and
year effects are included (coefficient 3.60, SE 1.73, $p=0.048$).  The
association is absent without those effects ($-0.22$, $p=0.48$), and lagged
own-industry venture capital does not predict industry employment growth
($-0.388$, $p=0.299$).  These specifications therefore establish
co-movement, not a capital-to-employment mechanism.

On the 33 countries observed at both endpoints, employment shares between 2018
and 2025 change by $+0.960$ percentage points for IT infrastructure and
hosting, $+0.904$ for health care, drugs, and biotechnology, and $-1.479$ for
the residual category.  Pooling a changing country set would instead give
$+0.845$, $+0.450$, and $-0.774$ points, respectively.  We use the balanced
figures because the United Kingdom is 11.7\% of pooled 2018 employment and is
absent in 2025.

On the industry panel, penalised and boosted learners evaluated by
leave-one-country-out cross-validation do not improve on least squares (RMSE
0.226 for the lasso and 0.239 for gradient boosting against 0.224 for OLS and
0.278 for the mean baseline); the lasso penalty is tuned within each training
fold and the boosting rounds by country-grouped inner cross-validation.

\paragraph{Occupation interaction.}
The estimating equation is
\[
  \Delta\log E_{iot}
    = \theta\,\bigl(R_{it}X_o\bigr)
      + \alpha_{it}+\mu_{io}+\lambda_{ot}+\varepsilon_{iot},
\]
where $R_{it}$ is standardised relative AI hiring and $X_o$ is occupation exposure.  The fixed effects are country-year, country-occupation, and occupation-year; standard errors are clustered by country.  Continuous exposure measures are standardised. The cognitive/manual contrast is binary.
Table~\ref{tab:app_occ} distinguishes five exposure encodings from the third row, which is a lagged-employment-weighted regression using the same published ILO exposure measure.

\begin{table}[t]
  \centering
  \small
  \setlength{\tabcolsep}{4pt}
  \caption{Occupation-exposure interaction estimates.}
  \label{tab:app_occ}
  \begin{tabular}{lrrrrr}
    \toprule
    Exposure specification & $\hat\theta$ & $p_{\mathrm{CRV1}}$ & $p_{\mathrm{WCB}}$ & $G^{*}$ & $N$ \\
    \midrule
    Published ILO index, weighted crosswalk       & $-0.618$ & 0.093 & 0.195 & 4.71 & 1{,}458 \\
    Published ILO index, unweighted crosswalk     & $-0.657$ & 0.080 & 0.163 & 4.66 & 1{,}458 \\
    Published ILO index, WLS by lagged employment & $-0.336$ & 0.376 & 0.440 & 3.05 & 1{,}458 \\
    AIOE-anchored rank proxy                      & $-0.892$ & 0.030 & 0.043 & 4.76 & 1{,}458 \\
    ILO-anchored rank proxy                       & $-0.717$ & 0.082 & 0.194 & 4.23 & 1{,}458 \\
    Cognitive versus manual                       & $-1.691$ & 0.026 & 0.063 & 5.81 & 1{,}296 \\
    \bottomrule
  \end{tabular}
\end{table}

The wild-cluster-bootstrap values in Table~\ref{tab:app_occ} use 99{,}999
Rademacher draws with the null imposed and seed 42; their Monte Carlo standard
errors range from 0.0006 to 0.0016.  The published ILO estimate is sensitive to
the 14 of 1{,}458 cells with $|\Delta\log E|>20$ log points: trimming those cells
moves the coefficient from $-0.62$ to $-0.33$ ($p_{\mathrm{WCB}}=0.34$).
Winsorising at $\pm20$ gives $-0.49$ ($p_{\mathrm{WCB}}=0.26$).  For the
author-coded AIOE rank, the corresponding trimmed and winsorised estimates are
$-0.42$ and $-0.66$, with bootstrap $p$-values 0.065 and 0.043.  Excluding the
COVID years, the largest economy, or 2021 preserves the sign of the published-index
interaction but gives CRV1 $p$-values 0.047, 0.080, and 0.103, versus bootstrap
$p$-values 0.191, 0.174, and 0.201.  Estonia contributes 37.9\% of the
unrestricted squared-score mass in the primary occupation specification.

A separate post-2022 design estimates $-0.405$ per SD for the interaction of
the published ILO index with the post indicator ($p=0.194$, $N=1{,}502$).
Interacting the published ILO exposure measure with national AI-hiring
intensity gives $-0.040$ ($p=0.824$, $N=1{,}286$).  These estimates do not
provide independent evidence of displacement.

\section{Auxiliary tests and inference diagnostics}
\label{app:aux}

\paragraph{Remaining preprocessing checks.}
Table~\ref{tab:app_preprocess} completes the preprocessing rows summarised in
Table~\ref{tab:robust}.  The sign is stable, while inference remains sensitive
to the reference distribution.  All wild-cluster-bootstrap values use 4{,}999
null-imposed Rademacher draws.

\begin{table}[t]
  \centering
  \small
  \setlength{\tabcolsep}{5pt}
  \caption{Additional preprocessing checks for the Panel A hiring coefficient.}
  \label{tab:app_preprocess}
  \begin{tabular}{lrrrrr}
    \toprule
    Specification & Estimate & 95\% CRV1 interval & $p_{\mathrm{CRV1}}$ & $p_{\mathrm{WCB}}$ & $N$ \\
    \midrule
    Winsorise employment growth & $-0.453$ & $[-0.927,\ 0.021]$ & 0.060 & 0.039 & 144 \\
    Winsorise hiring index      & $-0.562$ & $[-1.130,\ 0.006]$ & 0.052 & 0.048 & 144 \\
    Asinh migration             & $-0.481$ & $[-1.003,\ 0.042]$ & 0.070 & 0.050 & 144 \\
    Control for survey break    & $-0.476$ & $[-0.976,\ 0.024]$ & 0.061 & 0.037 & 144 \\
    \bottomrule
  \end{tabular}
\end{table}

\paragraph{Measures whose denominator is not total hiring.}
Table~\ref{tab:app_altmeasures} reports the denominator-independent probes.
Power is calculated against an effect of magnitude 0.474 per SD, using each
row's standard error and a two-sided $t(G-1)$ test.  The enterprise-AI-use
comparison is not informative: on its 96-observation common sample, relative
AI hiring is itself null ($p=0.655$), and the enterprise-use regression has
only 13\% power against the Panel A estimate.  The annual change in talent
concentration is the informative comparison: it has 90\% power and is close to
zero.  The three encodings are reported together to avoid choosing the only
well-powered one after seeing the estimates.

\begin{table}[t]
  \centering
  \small
  \setlength{\tabcolsep}{4pt}
  \caption{Denominator-independent measures and attainable power.}
  \label{tab:app_altmeasures}
  \begin{tabular}{lrrrrrr}
    \toprule
    Measure (per SD) & Estimate & SE & $p$ & $N$ & Power & MDE$_{80}$ \\
    \midrule
    Enterprise AI use, common sample & $-0.443$ & 0.545 & 0.425 & 96  & 0.13 & 1.596 \\
    Talent concentration, annual change & $+0.017$ & 0.141 & 0.908 & 221 & 0.90 & 0.411 \\
    Talent concentration, growth rate   & $-0.153$ & 0.184 & 0.414 & 221 & 0.70 & 0.537 \\
    Talent concentration, level         & $-0.347$ & 0.556 & 0.539 & 221 & 0.13 & 1.622 \\
    \bottomrule
  \end{tabular}
\end{table}

Adding Eurostat's annual job-vacancy rate does not discriminate between the
competing mechanisms.  On the Panel A vacancy subsample the hiring coefficient
moves from $-0.359$ ($p=0.081$, $N=131$, 22 countries) to $-0.366$ ($p=0.085$).
On the Panel B vacancy subsample it moves from $-0.0300$ ($p=0.057$, $N=120$,
24 countries) to $-0.0287$ ($p=0.054$).

\paragraph{Panel B and the 2021 break.}
Panel B retains the negative sign in the full 2021-2025 sample but does not
reproduce it after removing the survey-redesign year.  Coefficients in
Table~\ref{tab:app_panelb} are in raw index points.  The full-sample bootstrap
$p$-value is 0.0557 from 99{,}999 null-imposed Rademacher draws.

\begin{table}[t]
  \centering
  \small
  \setlength{\tabcolsep}{6pt}
  \caption{Panel B sensitivity to the 2021 Labour Force Survey break.}
  \label{tab:app_panelb}
  \begin{tabular}{lrrrrr}
    \toprule
    Sample & Estimate & Clustered SE & $p$ & $N$ & Countries \\
    \midrule
    Full, 2021-2025       & $-0.0274$ & 0.0138 & 0.0588 & 130 & 26 \\
    Excluding 2021         & $+0.0016$ & 0.0219 & 0.9427 & 104 & 26 \\
    Excluding break-flagged observations & $+0.0029$ & 0.0217 & 0.8961 & 102 & 26 \\
    \bottomrule
  \end{tabular}
\end{table}

\paragraph{Dynamics.}
Local projections estimate cumulative associations of $-0.471$, $-0.361$,
$-0.615$, and $-0.367$ at horizons zero through three.  The corresponding
CRV1 $p$-values are 0.063, 0.252, 0.041, and 0.161, while the sample falls from
144 to 72 observations.  The one-year lead is $-0.528$ with a 95\% interval
$[-1.386,0.330]$.  The non-monotone path, shrinking samples, and imprecise lead
do not support a dynamic causal interpretation and do not exclude reverse
causality.

\paragraph{Hierarchical model and flexible learners.}
A Bayesian hierarchical version of the baseline gives a posterior median of
$-0.43$ per SD for the hiring coefficient with 90\% interval $[-0.69,\,-0.19]$
($\widehat{R}\leq 1.005$, no divergent transitions), matching the least-squares
estimate. A random forest of 500 trees evaluated by leave-one-country-out
cross-validation attains RMSE 1.83 against 1.87 for the linear specification
and 1.95 for predicting the mean, so flexible learners recover no nonlinear
signal the linear model misses.

\paragraph{Influence concentration.}
Let $q_g$ be country $g$'s score contribution for the hiring coefficient and
let $s_g=q_g^2/\sum_h q_h^2$.  We report
\[
  G^{*}=\frac{1}{\sum_g s_g^2}
\]
as an inverse-Herfindahl diagnostic.  It should not be presented as a general degrees-of-freedom theorem.  In this fixed-effects design, the equicorrelated-cluster construction of Carter, Schnepel and Steigerwald \cite{carter2017}  is unavailable: the within-cluster vector of ones it is built on is annihilated by the country effects. The value reported here is therefore a squared-score plug-in.  For the baseline, unrestricted scores give $G^{*}=2.29$, with Estonia carrying
61.1\% of squared-score mass; null-restricted scores give 3.22, with T\"urkiye carrying 47.4\%.  Weighting countries by the within-cluster sum of squares of the residualised hiring regressor instead gives 5.52.  These alternatives show
that the exact scalar is estimator-dependent even though all diagnose severe concentration.  Table~\ref{tab:app_loco} reports the corresponding leave-one-country-out estimates.

\begin{table}[t]
  \centering
  \small
  \setlength{\tabcolsep}{5pt}
  \caption{Leave-one-country-out influence for the Panel A hiring coefficient.}
  \label{tab:app_loco}
  \begin{tabular}{lrrrr}
    \toprule
    Estimation sample & Estimate & CRV1 SE & $p$ & $N$ \\
    \midrule
    Full sample                  & $-0.474$ & 0.241 & 0.061 & 144 \\
    Drop T\"urkiye              & $-0.350$ & 0.191 & 0.080 & 138 \\
    Drop Latvia                  & $-0.497$ & 0.344 & 0.163 & 138 \\
    Drop Estonia                 & $-0.727$ & 0.233 & 0.005 & 138 \\
    Drop T\"urkiye and Latvia   & $-0.312$ & 0.254 & 0.232 & 132 \\
    \bottomrule
  \end{tabular}
\end{table}

The four reported reference distributions answer different questions and rest
on different assumptions.  The conventional CRV1 $t(23)$ reference gives
$p=0.0609$; the null-imposed wild-cluster bootstrap gives $p=0.0366$; and
studentized randomization inference over 20{,}000 permutations of whole-country
hiring trajectories gives $p=0.0450$.  The latter requires country trajectories
to be exchangeable conditional on the fixed-effect structure.  Referring the
same statistic to $t(G^{*}-1)$ gives $p=0.253$ using unrestricted
scores and $p=0.175$ using restricted scores, but those values are sensitivity
diagnostics rather than established finite-sample tests.  Using the
within-cluster-regressor weighting gives $p=0.112$.

\paragraph{Design-matched calibration.}
Table~\ref{tab:app_size} holds the observed design matrix fixed, imposes a zero
hiring effect, and draws errors independently within countries.  The first row
allows country-specific variances calibrated from the restricted residuals;
the second uses a common variance.  This is a favourable calibration, not a
model of arbitrary serial correlation.  It therefore cannot validate any
procedure for a general country panel.

\begin{table}[t]
  \centering
  \small
  \setlength{\tabcolsep}{5pt}
  \caption{Rejection rates at a nominal 5\% level (2{,}000 replications).}
  \label{tab:app_size}
  \begin{tabular}{lrrrr}
    \toprule
    Error process & CRV1 $t(23)$ & Wild bootstrap & $t(G^{*}-1)$ & Randomization \\
    \midrule
    Country-heteroskedastic & 0.051 & 0.054 & 0.014 & 0.041 \\
    Homoskedastic           & 0.067 & 0.046 & 0.032 & 0.041 \\
    \bottomrule
  \end{tabular}
\end{table}

Among the 353 heteroskedastic replications with score concentration at least as
severe as observed ($G^{*}\leq2.29$), rejection rates are 0.011
for CRV1, 0.031 for the wild bootstrap, and zero for
$t(G^{*}-1)$.  The plug-in itself is noisy: under this calibration
its median is 3.38 and its 5th-95th percentile range is 1.81-7.21, while the
observed 2.29 lies near the 17th percentile.  The simulations therefore show
that no single one of the four $p$-values should be treated as dispositive.

At the baseline CRV1 standard error of 0.2405, the closed-form 80\%-power
detection floor is 2.13 per SD at $G^{*}=2.29$, 1.19 at the
restricted-score value 3.22, 0.86 under the within-cluster-regressor weighting
5.52, and 0.70 if all 24 countries were equally effective.  Exact inversion of
the noncentral $t$ gives 2.38 and approximately 0.87 at the two ends of the
diagnostic range.  These calculations describe the current allocation of
identifying variation.  Additional observations concentrated in the same
influential countries need not improve the floor; adding countries or more
balanced within-country variation can.

\paragraph{Simulation behind Figure~\ref{fig:ident}.}
The numerical illustration fixes the true numerator-specific effect at zero,
sets $\gamma=0.6$, and draws 200{,}000 observations at each point of the grid.
It uses
\[
  R=(\gamma-1)\Delta\log H+u,\qquad
  g=0.8\Delta\log H-0.5s+\varepsilon,
\]
with independent standard-normal $\Delta\log H$ and $s$, Gaussian
$\varepsilon$ with standard deviation 0.3, and
$u=v+\rho s$ residualised on $\Delta\log H$.  The grid is
$\rho\in[-2.5,2.5]$.  The index-only coefficient ranges from $-0.35$ to
$+0.07$ even though the true effect is zero; after observing the denominator,
the coefficient on $u$ ranges from $-0.18$ to $+0.18$.  Because changing
$\rho$ changes both $\operatorname{Cov}(u,\eta)$ and
$\operatorname{Var}(u)$, the plotted curves are nonlinear.  The exercise is an
existence illustration, not a calibrated claim about the empirical frequency
or magnitude of either bias.

\paragraph{Event-study and shift-share diagnostics.}
The annual joint pre-trend test rejects for the AI-hiring level under
employment exposure ($p=0.010$), the change in AI hiring under employment
exposure ($p=0.015$), and the AI-hiring level under digital-core exposure
($p=0.024$).  It does not reject for total employment growth under employment
exposure ($p=0.100$), whose post-2022 coefficients are $+0.436$, $+0.014$, and
$+0.313$ and individually insignificant.  A separate monthly AI-hiring event
study rejects its joint pre-trend test ($p=0.017$).  There are four annual
designs and one separate monthly design.

The shift-share first stage is $+0.377$ ($p=0.025$, $R^2=0.023$) in a
bivariate regression and $-0.790$ ($p=0.031$) after adding the reported
controls.  The reduced-form coefficients are $-11.276$ for AI-hiring growth
($p=0.051$) and $+1.154$ for employment growth ($p=0.267$); the resulting IV
coefficient of AI hiring on employment growth is $-0.043$ ($p=0.905$).  The
first-stage sign reversal and rejected pre-trends preclude a causal reading.

\section{Specification curve for the primary coefficient}
\label{app:spec}

Figure~\ref{fig:spec} evaluates the defined specification set used in the
supplement: 192 Panel A combinations of controls, sample restrictions,
employment weights, winsorisation, and country-specific trends, together with
eight Panel B combinations of the 2021 restriction, vacancy control, and
employment weights.  Panel B coefficients are rescaled to percentage points
per sample SD of the hiring index.  All specifications include country and year
effects and use country-clustered CRV1 intervals with $t(G-1)$ critical values.

Of the 200 point estimates, 198 are negative.  Their median is $-0.533$ and
their range is $[-1.513,+0.097]$; the declared Panel A primary estimate
($-0.474$) is at the 62nd percentile.  These summaries describe sign and
magnitude sensitivity only.  The specifications are highly dependent, were
not a set of independently preregistered tests, and the display supplies no
familywise or selective-inference correction.  Accordingly, counts of rows
with $p<0.05$ are intentionally not used as evidence.

\begin{figure}[t]
  \centering
  \includegraphics[width=\linewidth]{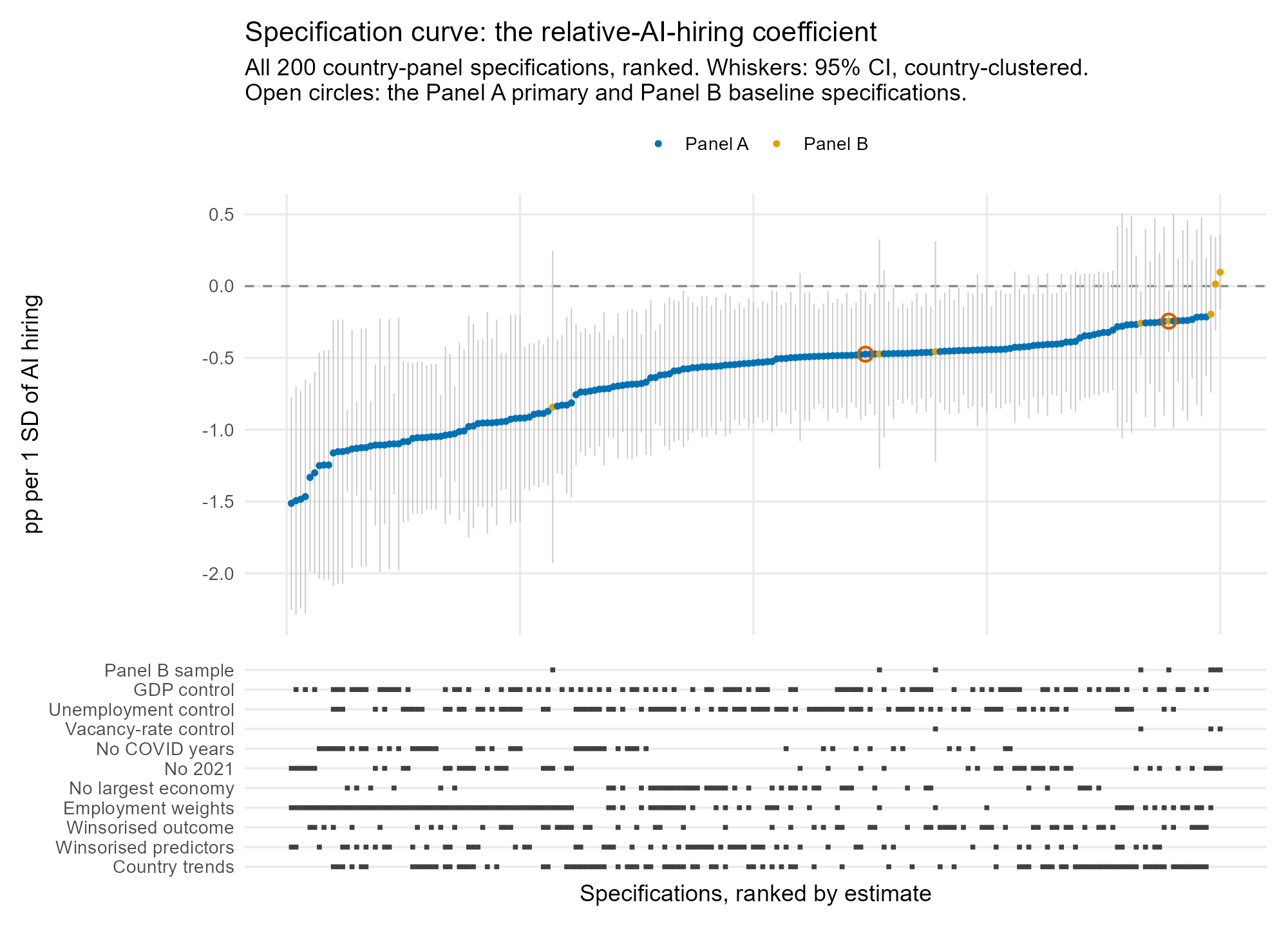}
  \caption{Relative-AI-hiring coefficients across 200 country-panel
  specifications.  Whiskers are 95\% country-clustered CRV1 intervals using
  $t(G-1)$ critical values.  Open circles mark the Panel A primary and Panel B
  baseline specifications.}
  \label{fig:spec}
\end{figure}

\section{Sources, samples, and additional figures}
\label{app:sources}

\paragraph{Sample construction and versioning.}
Table~\ref{tab:app_samples} records the samples used by the reported empirical
blocks.  Panel A is rebuilt from frozen raw inputs; Panel B is independently
joined from the later root-level extracts.  OECD.AI omits covered
country-years with zero recorded venture-capital deals, so the scripts fill
those cells with zero; a country absent from the extract entirely remains
missing.  T\"urkiye is the only such case in Panel A and has no venture-capital
value in any of its six years.

\begin{table}[t]
  \centering
  \small
  \setlength{\tabcolsep}{4pt}
  \caption{Analysis samples.}
  \label{tab:app_samples}
  \begin{tabular}{p{0.20\linewidth}p{0.15\linewidth}rrp{0.38\linewidth}}
    \toprule
    Block & Years & Countries & $N$ & Construction \\
    \midrule
    Panel A country-year & 2019-2024 & 24 & 144 & Frozen Panel A hiring and migration extracts; primary country panel. \\
    Panel B country-year & 2021-2025 & 26 & 130 & Refreshed hiring and migration extracts; includes 2025. \\
    Occupation interaction & 2019-2024 & 27 & 1{,}458 & Nine ISCO-08 major groups; country-year, country-occupation, and occupation-year effects. \\
    Post-2022 occupation design & 2017-2025 & 27 & 1{,}502 & Published ILO index; observations touching a Eurostat ``b'' break removed. \\
    Descriptive country/industry panels & 2018-2025 & varies & varies & OECD.AI activity and venture-capital series joined to Eurostat employment. \\
    \bottomrule
  \end{tabular}
\end{table}

The two annualized relative-AI-hiring files are not interchangeable.  On 208
overlapping country-years, the Panel A and refreshed extracts correlate 0.447;
200 Panel A cells are positive versus 119 refreshed cells.  The former has
mean 15.24 and standard deviation 13.22, while the latter has mean 1.67 and
standard deviation 7.72.  The exact files are
\path{data/panel_a/input/ai_hiring_index.csv} (truncated SHA-256
\texttt{65594a5d8ffff942}) and \path{data/ai_hiring_index.csv}
(\texttt{54df61bf6e9e3719}).  We found no public version history or methodological
notice explaining the revision.  The package's \texttt{data/MANIFEST.csv}
records file hashes, providers, and redistribution status; Panel A's frozen
inputs were exported on 13 August 2026 and are mapped to their source objects in
\path{data/panel_a/PROVENANCE.md}.

The Arena battle log is not redistributed because it declares no licence.
\path{scripts/fetch_arena_battles.R} downloads it, and the manifest records
the truncated SHA-256 \texttt{d4fb8013b0306624}.  All randomized procedures
set seed 42; the baseline country-panel and occupation bootstraps use 99{,}999
draws, and the randomization-inference calculation uses 20{,}000 permutations.

\paragraph{Additional figures.}
The intervals in Figure~\ref{fig:industry} and
Figure~\ref{fig:identification} use the estimation package's default
small-sample correction, as those figures belong to the descriptive pipeline.
The primary country-panel tables, Figure~\ref{fig:spec}, and
Figure~\ref{fig:replication} use the explicit CRV1 convention described in the
main text.  Stating the convention here avoids treating the package default and
the paper's CRV1 calculation as the same standard error.

\begin{figure}[t]
  \centering
  \includegraphics[width=\linewidth]{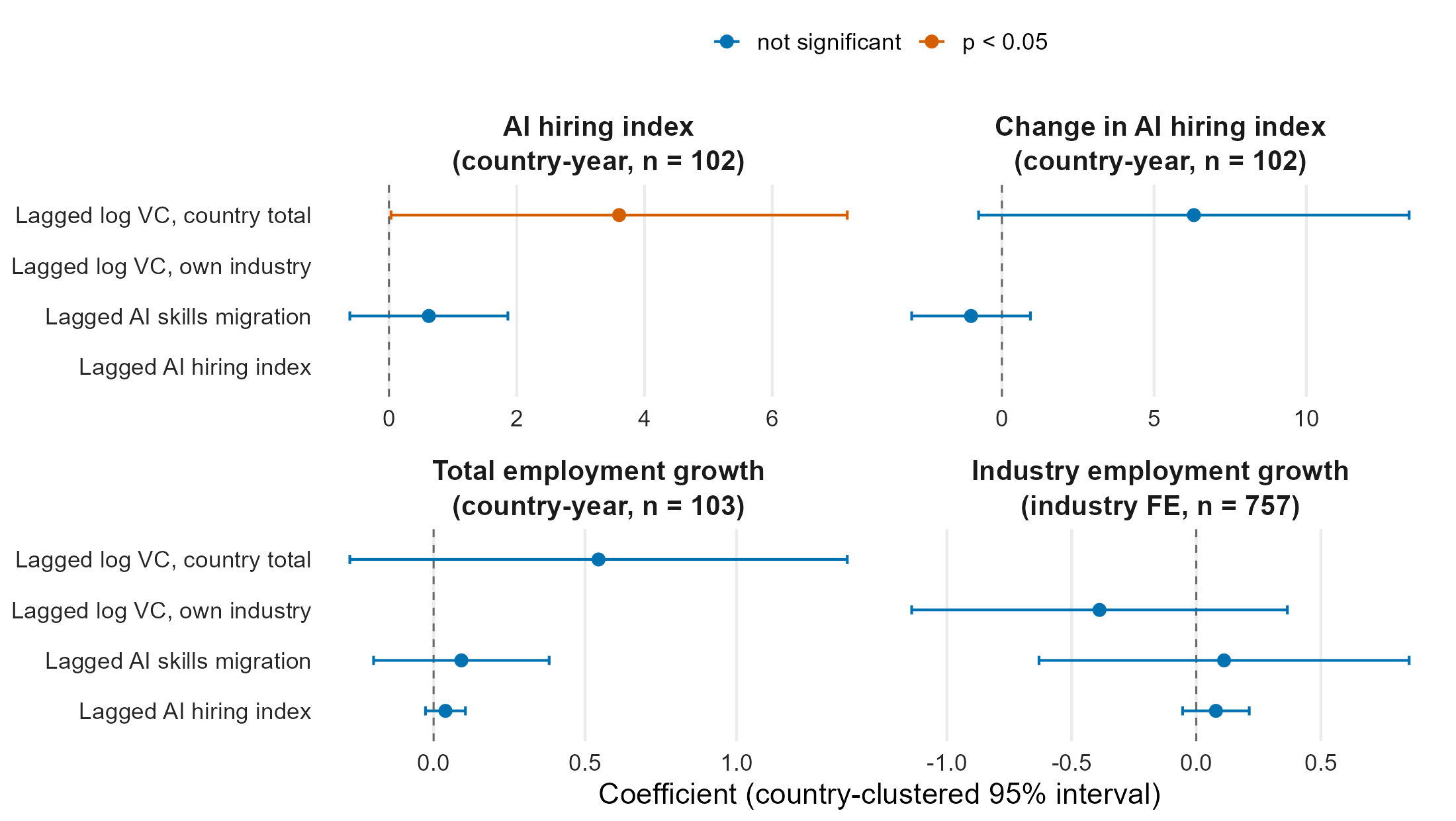}
  \caption{Four fixed-effects co-movement regressions for 2022-2025.  The
  first three are country-year models; the fourth is a country-industry-year
  model with industry effects.  Intervals are 95\% country-clustered intervals
  using the estimation package's default small-sample correction.  Only lagged
  country-total venture capital in the hiring-level regression excludes zero.}
  \label{fig:industry}
\end{figure}

\begin{figure}[t]
  \centering
  \includegraphics[width=0.66\linewidth]{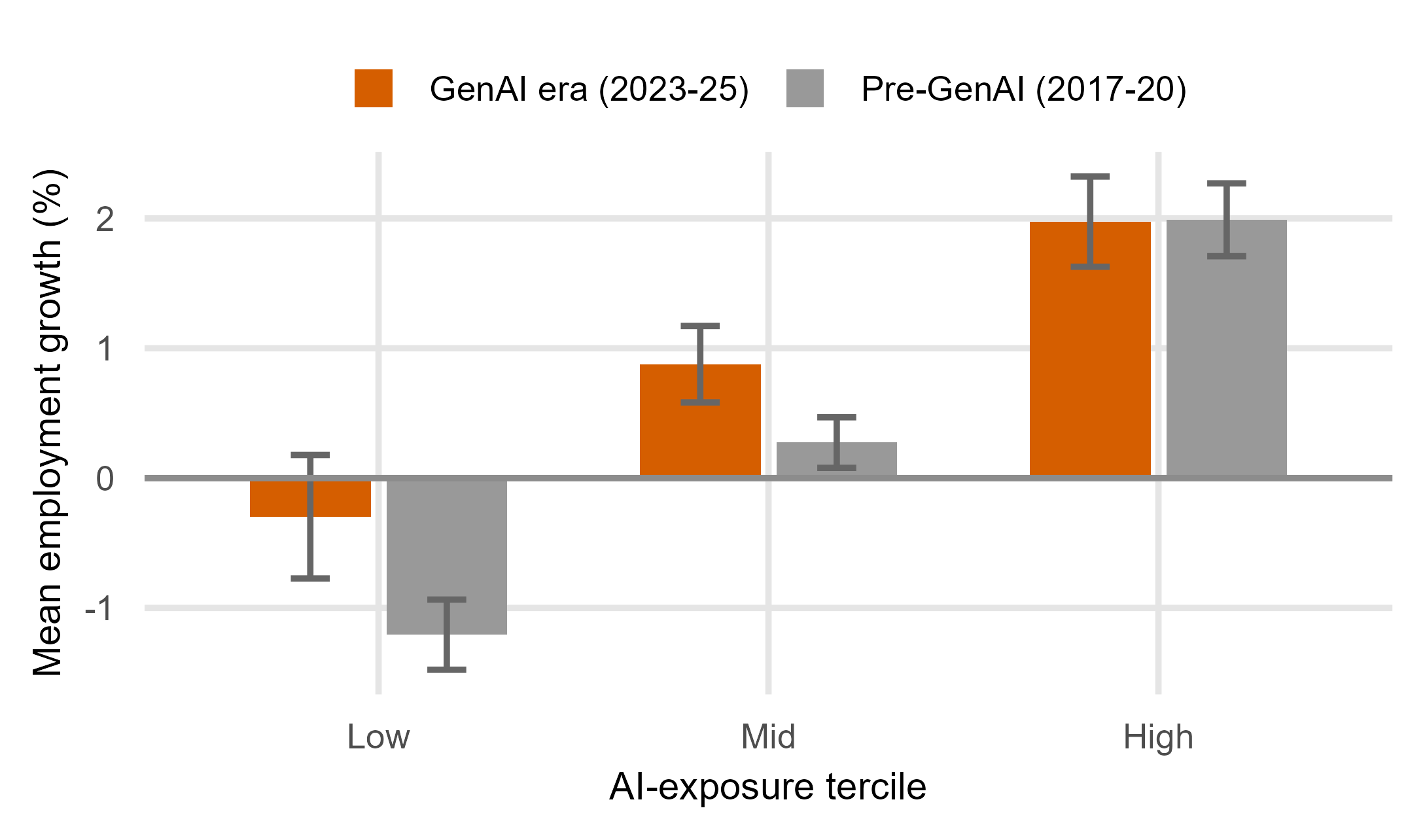}
  \caption{Mean occupation employment growth by tercile of the published ILO
  exposure measure before generative AI (2017-2020) and in 2023-2025.  Bars
  are one standard error across country means.  Growth is higher in both eras
  for the most exposed tercile, and the era-over-era increase is largest for
  the least exposed tercile.}
  \label{fig:terciles}
\end{figure}

\begin{figure}[t]
  \centering
  \includegraphics[width=0.70\linewidth]{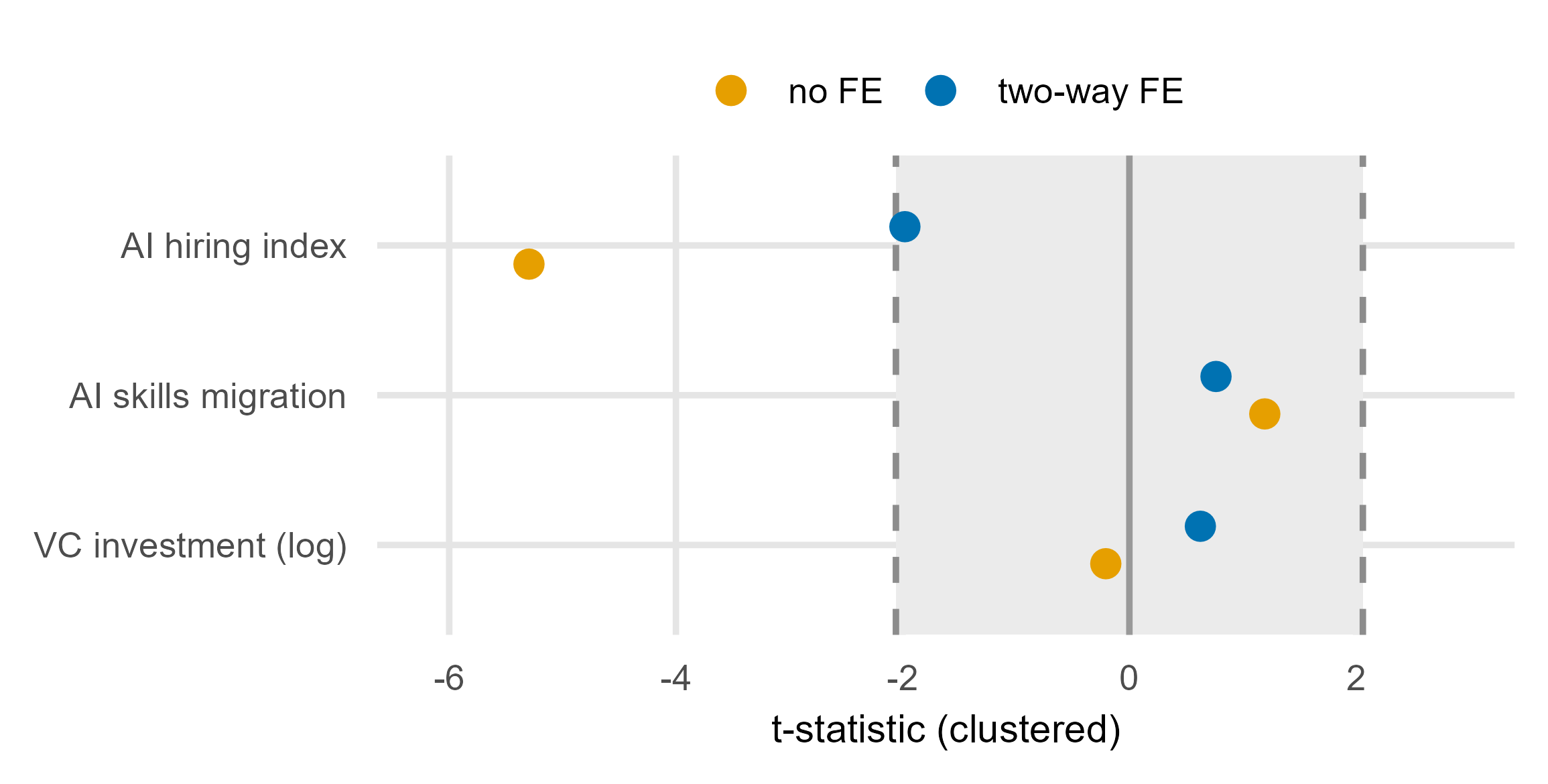}
  \caption{Panel B (26 countries, 2021-2025, $N=130$): clustered
  $t$-statistics for the three baseline predictors with and without country and
  year effects.  Dashed lines mark two-sided 5\% critical values based on
  $t(25)$.  The full-sample hiring coefficient is near, but does not cross, the
  threshold under two-way effects.}
  \label{fig:replication}
\end{figure}

\begin{figure}[t]
  \centering
  \includegraphics[width=\linewidth]{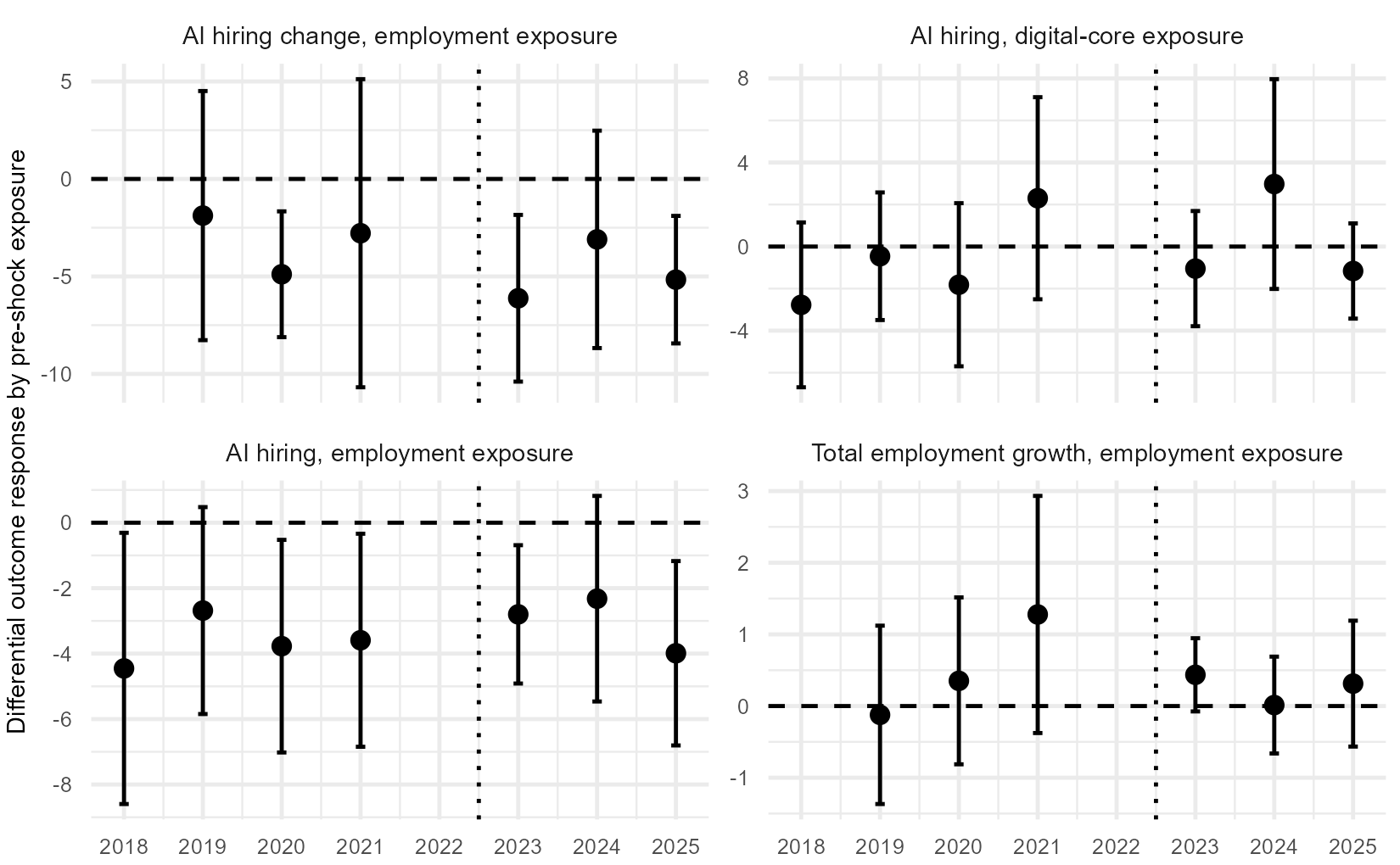}\par\medskip
  \includegraphics[width=\linewidth]{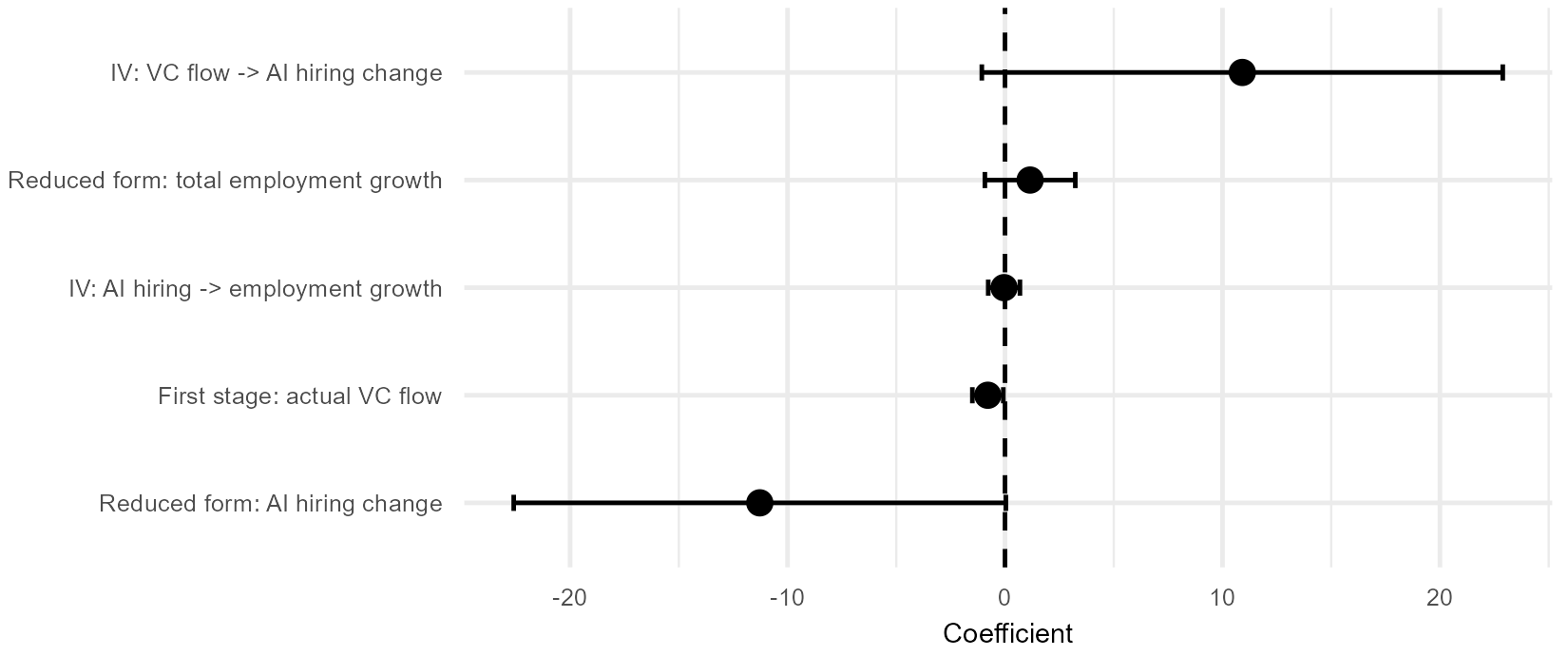}
  \caption{Identification diagnostics.  Top: annual event studies interacting
  year with 2018 employment exposure, with 2022 omitted.  Three of four annual
  specifications reject the joint pre-trend null.  Bottom: shift-share first
  stage, reduced forms, and IV estimates; the instrument uses 2018 main-industry
  employment shares interacted with leave-one-out European AI venture-capital
  main-industry shocks, and the first-stage coefficient changes sign when
  controls enter.  Intervals use the estimation package's default
  country-clustered small-sample correction.  These failures motivate a
  descriptive, not causal, interpretation.}
  \label{fig:identification}
\end{figure}

\end{document}